\newtheorem{theorem}{Theorem}[section]
\newtheorem{prop}[theorem]{Proposition}
\newtheorem{lemma}[theorem]{Lemma}
\newtheorem{corollary}[theorem]{Corollary}
\theoremstyle{remark}
\newtheorem{remark}{Remark}
\begin{document}

\title{Algebraic Symmetry and Self--Duality of an Open ASEP}

\author{Jeffrey Kuan}

\date{}

\maketitle

\abstract{We consider the asymmetric simple exclusion process (ASEP) with open boundary condition at the left boundary, where particles exit at rate $\gamma$ and enter at rate $\alpha=\gamma \tau^2$, and where $\tau$ is the asymmetry parameter in the bulk. At the right boundary, particles neither enter nor exit.  By mapping the generator to the Hamiltonian of an XXZ quantum spin chain with reflection matrices, and using previously known results, we show algebraic symmetry and self--duality for the model.}

\section{Introduction}
The asymmetric simple exclusion process (ASEP) is an interacting particle system on a one dimensional lattice, introduced in \cite{Spit70} and \cite{MGP68}. Particles jump one step to the right at rate $p$ and one step to the left at rate $q$, and the jump is blocked if the site is already occupied. As first observed in \cite{HenSch94}, the generator of ASEP can be mapped to the Hamiltonian of the XXZ quantum spin chain, which (with closed boundary conditions) possesses a quantum group symmetry \cite{PasSal90}. Using this symmetry, \cite{Sch97} proves a Markov self--duality for the ASEP with closed boundary conditions. Various modifications and generalizations of this self--duality have since been found \cite{IS},  \cite{BCSDuality}, \cite{BS}, \cite{BS2}, \cite{KuanJPhysA}, \cite{KIMRN},  \cite{BS3}.

A natural extension is to consider open boundary conditions, where particles may enter or exit the lattice. Let $\alpha,\gamma$ denote the entry and exit rates at the left boundary, and let $\beta,\delta$ denote the entry and exit rates at the right boundary. With open boundary conditions, the quantum group symmetry is broken. However, it turns out that for $\alpha/\gamma = p/q, \beta=\delta=0$, a specific algebra element still commutes with the Hamiltonian \cite{Doikou04}. Here, we use this algebra element to show a self--duality for this open ASEP.

\section{Preliminaries} 

\subsection{Spin Chain Notation}
Recall that $\sigma^x,\sigma^y,\sigma^z$ are the Pauli matrices
\begin{equation*}
 \sigma^{x} =\left(\begin{array}{cc}{0} & {1} \\ {1} & {0}\end{array}\right) 
 \quad \quad  
 \sigma^{y} =\left(\begin{array}{cc}{0} & {-i} \\ {i} & {0}\end{array}\right) \quad \quad 
  \sigma^{z} =\left(\begin{array}{cc}{1} & {0} \\ {0} & {-1}\end{array}\right) ,
\end{equation*}
which are a basis for $\mathfrak{sl}_2$, the traceless $2\times 2$ matrices. Also define
\begin{align*}
\sigma^+ &= \frac{\sigma^x + i\sigma^y}{2} =\left(\begin{array}{cc}{0} & {1} \\ {0} & {0}\end{array}\right) ,\\
\sigma^- &= \frac{\sigma^x- i\sigma^y}{2} =\left(\begin{array}{cc}{0} & {0} \\ {1} & {0}\end{array}\right) ,\\
n &= \frac{1-\sigma^z}{2} =\left(\begin{array}{cc}{0} & {0} \\ {0} & {1}\end{array}\right) .
\end{align*}
The subscripts $_j$ indicates that a matrix acts at lattice site $j$. For example, $\sigma^x_j$ acts on $(\mathbb{C}^2)^{\otimes L}$ as $1^{\otimes j-1} \sigma^x \cdots 1^{\otimes L-j}$.
The --ket vector $\left(\begin{array}{c} 0 \\ 1 \end{array}\right)$ corresponds to a \textit{particle} and  $\left(\begin{array}{c} 1 \\ 0 \end{array}\right)$ corresponds to a \textit{hole}. The operators $\sigma_j^-$ and $\sigma_j^+$ are creation and annihilation operators, respectively.

Recall the Yang--Baxter equation
$$
R_{12}(\lambda_1-\lambda_2)R_{13}(\lambda_1 - \lambda_3) R_{23}(\lambda_2 - \lambda_3) = R_{23}(\lambda)R_{12}(\lambda_1)R_{12}(\lambda_1-\lambda_2)
$$
and the reflection equation \cite{Cher84}
$$
R_{12}(\lambda_1-\lambda_2)\mathcal{K}_1(\lambda_1)R_{21}(\lambda_1 + \lambda_2)\mathcal{K}_2(\lambda_2) =  \mathcal{K}_2(\lambda_2) R_{12}(\lambda_1 + \lambda_2)\mathcal{K}_1(\lambda_1)R_{21}(\lambda_1-\lambda_2)
$$
The $R$--matrix of the XXZ model is a one--parameter solution to the Yang--Baxter equation, with the parameter denoted by $\mu$ (in addition to $\lambda$). A solution to the reflection equation is given in \cite{Doikou04}, which has three parameters, denoted $\mu,m,\zeta$ (in addition to $\lambda$). 

The transfer matrix constructed from $R,\mathcal{K}$ leads to a Hamiltonian \cite{Skly88}, which is stated as (1.3) from \cite{Doikou04}:
$$
\begin{aligned} \mathcal{H}=-\frac{1}{4} \sum_{i=1}^{L-1}\left(\sigma_{i}^{x} \sigma_{i+1}^{x}+\sigma_{i}^{y} \sigma_{i+1}^{y}+\cosh i \mu\  \sigma_{i}^{z} \sigma_{i+1}^{z}\right)-\frac{1}{4} \sinh i \mu\left(\sigma_{L}^{z}-\sigma_{1}^{z}\right)-\frac{L+1}{4} \cosh i \mu \\+ \frac{\sinh i \mu}{4 \sinh i \mu\left(\frac{m}{2}+\zeta\right) \cosh i \mu\left(\frac{m}{2}-\zeta\right)} \left(-\sinh(im\mu) \sigma_{1}^{z} + \sigma_{1}^{x}\right)+c_{1}+c_{2} \sigma_{L}^{z}, .\end{aligned}
$$
where $c_1 = (q+q^{-1})^{-1}/2,c_2=0$.
This Hamiltonian is integrable, in the sense that it commutes with a family of transfer matrices: see (2.30) and (4.28) of \cite{Doikou04}.



\subsection{Quantum Groups}
Let $A$ be the Cartan matrix of the affine Lie algebra $\widehat{\mathfrak{sl}}_2$
$$
A=\left(\begin{array}{cc}{2} & {-2} \\ {-2} & {2}\end{array}\right)
$$
The Drinfeld--Jimo quantum group $\mathcal{U}_{\tau}(\widehat{\mathfrak{sl}}_2)$ is the bi--algebra generated by $\{e_i,f_i,k_i\}_{i = 1,2}$ with relations
\begin{align*}
k_i k_j = k_j k_i, \quad \quad k_i e_j &= {\tau}^{\tfrac{1}{2} a_{ij}}e_j k_i, \quad \quad k_i f_j = {\tau}^{-\tfrac{1}{2} a_{ij}}f_j k_i \\
[ e_i, f_j] &= \delta_{ij} \frac{k_i^2 - k_i^{-2}}{{\tau}-{\tau}^{-1}}, \quad i,j=1,2,
\end{align*}
and
\begin{align*}
\chi_i^3 \chi_j - ({\tau}^2 + 1 + {\tau}^{-2}) \chi_i^2 \chi_j \chi_i + ({\tau}^2 + 1 + {\tau}^{-2}) \chi_i \chi_j \chi_i^2 - \chi_j \chi_i^3 = 0, \quad \quad \chi_i = {e_i, f_i}, \quad i\neq j. 
\end{align*}
The co--product is defined by
$$
\Delta(\chi_i) = k_i^{} \otimes \chi_i + \chi_i \otimes k_i^{-1}, \quad \chi_i= e_i,f_i, \quad \quad \Delta(k_i^{\pm}) = k_i^{\pm} \otimes k_i^{\pm},
$$
and satisfies the co--associativity
$$
\Delta^{(L)} = (\Delta^{(L-1)} \otimes \mathrm{id}) \circ \Delta= (\mathrm{id} \otimes \Delta^{(L-1)} )\circ \Delta.
$$
The evaluation module $\rho_{\lambda}: \mathcal{U}_{\tau}(\widehat{\mathfrak{sl}}_2) \rightarrow \mathrm{End}(\mathbb{C}^2)$ is given by 
\begin{align*}
\rho_{\lambda}(k_1) &= {\tau}^{\sigma^z/2}, \quad \rho_{\lambda}(e_1) = \sigma^+, \quad \rho_{\lambda}(f_1) = \sigma^-\\
\rho_{\lambda}(k_2) &= {\tau}^{-\sigma^z/2}, \quad \rho_{\lambda}(e_2) = e^{-2\lambda} \sigma^-, \quad \rho_{\lambda}(f_2) = e^{2\lambda} \sigma^+.
\end{align*}

\section{Main Results}
Suppose that an ASEP on $L$ lattice sites evolves with right jump rates $p$ and left jump rates $q$ (without assuming $p+q=1$). Particles enter from the left at rate $\alpha$, exit at the right at rate $\beta$, exit at the left at rate $\gamma$, and enter from the right at rate $\delta$. Define the asymmetry parameter
$$
\tau = \sqrt{\frac{p}{q}}.
$$
More precisely, define the generator of ASEP to be the operator\footnote{Here, we use the mathematical physics convention that a stochastic matrix has columns that add up to $1$, rather than its rows.}
\begin{multline*}
\mathcal{L}= - \sqrt{pq}\sum_{j=1}^L  \left( \tau^{-1}( \sigma^-_j \sigma^+_{j+1} - (1-n_j)n_{j+1} ) + \tau(\sigma^-_{j+1}\sigma^+_j - n_j(1-n_{j+1})  ) \right)\\
- \alpha(\sigma_1^- - 1 + n_1) - \gamma(\sigma_1^+-n_1) - \delta(\sigma_L^- - 1  + n_L)-\beta(\sigma_L^+-n_L).
\end{multline*}

In the  ASEP to XXZ change of basis \cite{HenSch94}, the generator of ASEP becomes the XXZ Hamiltonian (see also e.g. equations (2.12)--(2.14) of \cite{SandowPASEP}). More specifically, let $V$ denote the operator
$$
V = \tau^{-\sum_{j=1}^L j n_j}.
$$
Then
\begin{multline*}
V \mathcal{L}V^{-1} = -\frac{1}{2} \sqrt{pq} \sum_{j=1}^{L-1} \left[ \sigma^x_{j}\sigma^x_{j+1} +  \sigma^y_{j}\sigma^y_{j+1} + \frac{\tau + \tau^{-1}}{2} \sigma^z_{j}\sigma^z_{j+1} - \frac{\tau + \tau^{-1}}{2} \right] \\
- A_1^+ \sigma_1^x - i A_1^- \sigma_1^y - B_1 \sigma_1^z - A_L^+ \sigma_L^x - i A_L^- \sigma_L^y - B_L \sigma_L^z + \frac{1}{2}(\alpha + \beta + \gamma + \delta),
\end{multline*}
where
$$
A_1^{\pm} = \frac{1}{2} (\gamma \tau \pm \alpha \tau^{-1}), \quad \quad B_1 = \frac{1}{2}(\gamma - \alpha) + \frac{1}{4}(\tau - \tau^{-1}), \quad \quad A_L^{\pm} = \frac{1}{2}(\beta \tau^L \pm \delta \tau^{-L}), \quad \quad B_L = \frac{1}{2}(\beta - \delta) - \frac{1}{4}(\tau - \tau^{-1}).
$$
Note that
$$
A_1^- = 0 \text{ if and only if } \frac{\alpha}{\gamma} = \frac{ p}{q} \text{ or } \alpha=\gamma=0, \quad \quad A_L^- = 0 \text{ if and only if } \frac{\delta}{\beta} = \left(\frac{p}{q}\right)^L \text{ or } \beta=\delta=0.
$$

\begin{prop}\label{Prop}

Fix $m=-1$, and $\zeta$ such that
$$
\frac{1}{2}(\gamma \tau + \alpha \tau^{-1}) = \frac{\tau - \tau^{-1}}{\tau + (\tau^{-2\zeta} - \tau^{2\zeta}) - \tau^{-1}}.
$$ 
Then there exists a constant $C$ such that
$$
2\mathcal{H} = V \mathcal{L} V^{-1}  + C \mathbb{I}
$$
for the values $\beta=\delta=0$, $\alpha/\gamma = p/q$, $\sqrt{pq} = 1$, $\tau = e^{-i\mu}$.  

\end{prop}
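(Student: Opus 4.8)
The plan is to expand both $2\mathcal H$ and $V\mathcal L V^{-1}$ in the linearly independent family of Pauli words $\{\sigma_i^x\sigma_{i+1}^x,\sigma_i^y\sigma_{i+1}^y,\sigma_i^z\sigma_{i+1}^z\}_{i=1}^{L-1}\cup\{\sigma_1^x,\sigma_1^y,\sigma_1^z,\sigma_L^x,\sigma_L^y,\sigma_L^z\}\cup\{\mathbb I\}$ and to check that every coefficient agrees, except possibly the coefficient of $\mathbb I$, which is then the constant $C$. First I would impose the stated conditions on $V\mathcal L V^{-1}$: with $\sqrt{pq}=1$ the bulk prefactor becomes $-\tfrac12$; with $\beta=\delta=0$ one has $A_L^\pm=0$ and $B_L=-\tfrac14(\tau-\tau^{-1})$, so the $\sigma_L^x,\sigma_L^y$ terms disappear; and with $\alpha/\gamma=p/q$ one has $A_1^-=0$, so the $\sigma_1^y$ term disappears. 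Thus $V\mathcal L V^{-1}$ becomes an XXZ bulk plus $-A_1^+\sigma_1^x-B_1\sigma_1^z-B_L\sigma_L^z$ plus a scalar. In parallel I would set $m=-1$, $c_2=0$ in the Hamiltonian $\mathcal H$ of Section~2 and abbreviate $K=\dfrac{\sinh i\mu}{4\sinh(i\mu(\frac m2+\zeta))\cosh(i\mu(\frac m2-\zeta))}$, so that the boundary part of $\mathcal H$ is $K\bigl(-\sinh(im\mu)\sigma_1^z+\sigma_1^x\bigr)$.

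Next I would pass to the variable $\mu$ via $\tau=e^{-i\mu}$, so that $\tfrac12(\tau+\tau^{-1})=\cosh i\mu$ and $\tau-\tau^{-1}=-2\sinh i\mu$. With these substitutions the bulk of $2\mathcal H$ matches the bulk of $V\mathcal L V^{-1}$ term by term (up to a scalar $\tfrac{L-1}{2}\cosh i\mu$), and the $\sigma_L^z$ coefficients match because $-B_L=\tfrac14(\tau-\tau^{-1})=-\tfrac12\sinh i\mu$ is exactly what multiplies $\sigma_L^z$ in $2\mathcal H$ when $c_2=0$. The only coefficients requiring real work are those at site $1$. Using $\sinh(i\mu(\frac m2+\zeta))\cosh(i\mu(\frac m2-\zeta))=\tfrac12\bigl(\sinh(im\mu)+\sinh(2i\mu\zeta)\bigr)$, at $m=-1$ this gives $2K=\dfrac{\sinh i\mu}{\sinh(2i\mu\zeta)-\sinh i\mu}$; rewriting $\sinh i\mu=-\tfrac12(\tau-\tau^{-1})$ and $\sinh(2i\mu\zeta)=\tfrac12(\tau^{-2\zeta}-\tau^{2\zeta})$ shows that the relation $2K=-\tfrac12(\gamma\tau+\alpha\tau^{-1})=-A_1^+$ is precisely the defining equation for $\zeta$ in the statement, so the $\sigma_1^x$ coefficients match. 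For $\sigma_1^z$: the coefficient in $2\mathcal H$ is $\tfrac12\sinh i\mu-2K\sinh(im\mu)=\tfrac12\sinh i\mu+2K\sinh i\mu$ at $m=-1$; substituting the value of $2K$, using $\sinh i\mu=-\tfrac12(\tau-\tau^{-1})$, and then $\alpha=\gamma\tau^2$ (which is $\alpha/\gamma=p/q$) collapses $2K\sinh i\mu$ to $\tfrac12(\alpha-\gamma)$, giving $\tfrac12\sinh i\mu+\tfrac12(\alpha-\gamma)$, which is exactly $-B_1=\tfrac12(\alpha-\gamma)-\tfrac14(\tau-\tau^{-1})$. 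Hence all non-scalar coefficients agree, and the leftover scalar is the constant $C$ (explicitly $C=-L\cosh i\mu+2c_1-\tfrac12(\alpha+\gamma)$).

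The computation is essentially mechanical; the only delicate point is the bookkeeping at site $1$, where the coefficient $K$ --- a rational function of $\sinh$ and $\cosh$ of half-integer multiples of $i\mu$ --- must be simplified via the hyperbolic addition formulas and then re-expressed in the $\tau$-variable so that the $\zeta$-condition becomes visible, and where matching the $\sigma_1^z$ coefficient additionally consumes the hypothesis $\alpha/\gamma=p/q$ (equivalently $A_1^-=0$). One should also note that $c_1=(q+q^{-1})^{-1}/2$ is a pure scalar, so its precise value (which equals $(2\cosh i\mu)^{-1}/2$, since $pq=1$ forces $q+q^{-1}=\tau+\tau^{-1}$) is irrelevant to the statement and only enters through the value of $C$.
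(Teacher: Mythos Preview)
Your proof is correct and follows essentially the same approach as the paper: expand both operators in the Pauli basis and match coefficients, with the bulk and $\sigma_L^z$ terms immediate and the site-$1$ terms requiring the $\zeta$-condition together with $\alpha=\gamma\tau^2$. The only cosmetic difference is that the paper first observes the $\sigma_1^x$ and $\sigma_1^z$ conditions are equivalent (via $\sinh(im\mu)=-(\gamma-\alpha)/(\gamma\tau+\alpha\tau^{-1})$ at $m=-1$) and then verifies one of them by expanding the product in the denominator directly, whereas you simplify $K$ once via the addition formula $\sinh a\cosh b=\tfrac12(\sinh(a+b)+\sinh(a-b))$ and then check both coefficients; the content is the same.
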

\begin{proof}
We merely need to match the coefficients of the operators in $2\mathcal{H}$ and $V\mathcal{L}V^{-1}$. It is immediate that the coefficients of $\sigma_j^x\sigma_{j+1}^x$ and $\sigma_j^y\sigma_{j+1}^y$ match.
Setting $\tau = e^{-i\mu}$, we have
$$
\cosh i\mu = \frac{e^{i\mu} + e^{-i\mu}}{2} = \frac{\tau + \tau^{-1}}{2}, \quad \quad \sinh i\mu =  \frac{e^{i\mu} - e^{-i\mu}}{2} = -\frac{\tau - \tau^{-1}}{2},
$$
showing that the coefficient of $\sigma_j^z \sigma_{j+1}^z$ matches. If furthermore, $\beta=\delta=0$, then $A_L^{\pm}=0$ and there is no $\sigma_L^x,\sigma_L^y$ contribution, so the $\sigma_L^x,\sigma_L^y$ coefficient matches. For $\alpha = (p/q)\gamma = \gamma \tau^2$, the term $A_1^-$ equals zero, so the $\sigma^y_1$ coefficient matches. Note that when $c_2=0$, the coefficient of $\sigma_L^z$ in $2\mathcal{H}$ becomes
$$
-\frac{1}{2} \sinh i\mu =  \frac{1}{4}(\tau - \tau^{-1}),
$$
which equals $-B_L$ with $\beta=\delta$. Thus the coefficient of $\sigma_L^z$  matches. 

It remains to match the coefficients of $\sigma_1^x$ and $\sigma_1^z$, meaning that there are two equalities to show So comparing the $\sigma_1^z$ terms, it remains to show
$$
-B_1 = -\frac{1}{2}(\gamma -\alpha) - \frac{1}{4}(\tau - \tau^{-1}) = \frac{1}{2}\sinh i\mu - \frac{\sinh i \mu\  \sinh (i m \mu)}{2 \sinh i \mu\left(\frac{m}{2}+\zeta\right) \cosh i \mu\left(\frac{m}{2}-\zeta\right)}
$$
and comparing the $\sigma_1^x$ terms, it remains to show
$$
-A_1^+ = -\frac{1}{2}(\gamma \tau + \alpha \tau^{-1}) = \frac{\sinh i \mu}{2 \sinh i \mu\left(\frac{m}{2}+\zeta\right) \cosh i \mu\left(\frac{m}{2}-\zeta\right)}.
$$
 In the former equality, the term $\frac{1}{2}\sinh i\mu$ cancels $-\frac{1}{4}(\tau-\tau^{-1})$, so we are left to show
$$
\frac{1}{2}(\gamma - \alpha) = \frac{\sinh i \mu\  \sinh (i m \mu)}{2 \sinh i \mu\left(\frac{m}{2}+\zeta\right) \cosh i \mu\left(\frac{m}{2}-\zeta\right)} .
$$

Now, using that $m=-1$, we have
$$
\sinh(i m\mu) = - \frac{\tau^m - \tau^{-m}}{2} = - \frac{1-\tau^2}{2\tau}= - \frac{\gamma - \alpha}{2\gamma \tau}  = - \frac{\gamma - \alpha}{\gamma \tau + \alpha \tau^{-1}}, 
$$
which means that the two equalities are equivalent to each other. So it just remains to show that

\begin{align*}
\frac{1}{2}(\gamma \tau + \alpha \tau^{-1}) & = - \frac{\tau-\tau^{-1}}{ (\tau^{-1/2}e^{i\mu\zeta} - \tau^{1/2}e^{-i\mu\zeta})(\tau^{-1/2}e^{-i\mu\zeta} + \tau^{1/2}e^{i\mu\zeta} )   }\\
&= - \frac{\tau - \tau^{-1}}{ \tau^{-1} + (e^{2i\mu\zeta} - e^{-2i\mu\zeta}) - \tau }\\
&= \frac{\tau - \tau^{-1}}{\tau + (\tau^{-2\zeta} - \tau^{2\zeta}) - \tau^{-1}},
\end{align*}
which we have assumed to be true.

\end{proof}

\begin{remark}
In \cite{Ligg75}, it shown that on the half--line, stationary measures exist when $\alpha/p + \gamma/q=1$. A phase transition occurs at $\alpha/p=1/2$: for $\alpha/p<1/2$, there exist stationary measures with i.i.d. Bernoulli random variables with parameter $\alpha/p$, and when $\alpha/p>1/2$ the stationary measures are spatially correlated. Under the additional condition that $\alpha/p + \gamma/q=1$, the condition $\alpha/\gamma=p/q$ is equivalent to $\alpha/p = \gamma/q=1/2$.
\end{remark}

\begin{remark}
The condition $\alpha/\gamma = p/q$ had previously appeared in \cite{BBCWDuke}, which considered $\alpha/p=\gamma/q=1/2$. Note that a variant of the reflection equation is satisfied in the stochastic vertex model of \cite{BBCWDuke} -- see Propositions 4.3 and 4.10 in that reference.
\end{remark}

\begin{remark}
The open boundary conditions here are different than the one considered in \cite{KuanSF}, which proves duality (but not self--duality) for the ASEP without algebraic considerations. 
\end{remark}

\begin{lemma}\label{Lem}
For the values $\beta=\delta=0$, we have the detailed balance condition
$$
V^{2} \mathcal{L} V^{-2} = \mathcal{L}^*, 
$$
where the $^*$ denotes the transposition.
\end{lemma}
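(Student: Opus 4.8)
The plan is to verify $V^{2}\mathcal{L}V^{-2}=\mathcal{L}^{*}$ by a direct term-by-term comparison, computing separately the effect of conjugation by $V^{2}$ and of transposition on each summand of $\mathcal{L}$ written in the occupation basis. The starting observation is that $V=\tau^{-\sum_{j}jn_{j}}$ is diagonal, so conjugation rescales the elementary operators: $V\sigma_{j}^{-}V^{-1}=\tau^{-j}\sigma_{j}^{-}$, $V\sigma_{j}^{+}V^{-1}=\tau^{j}\sigma_{j}^{+}$, and $Vn_{j}V^{-1}=n_{j}$; hence $V^{2}\sigma_{j}^{\pm}V^{-2}=\tau^{\pm 2j}\sigma_{j}^{\pm}$ and $V^{2}n_{j}V^{-2}=n_{j}$. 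On the transposition side, $(\sigma_{j}^{\pm})^{*}=\sigma_{j}^{\mp}$ and $n_{j}^{*}=n_{j}$, and since factors at distinct sites commute, $(\sigma_{j}^{-}\sigma_{j+1}^{+})^{*}=\sigma_{j+1}^{-}\sigma_{j}^{+}$ and $(\sigma_{j+1}^{-}\sigma_{j}^{+})^{*}=\sigma_{j}^{-}\sigma_{j+1}^{+}$, while the diagonal products $(1-n_{j})n_{j+1}$ and $n_{j}(1-n_{j+1})$ are fixed by both operations.

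For the bulk sum, conjugating $\tau^{-1}\sigma_{j}^{-}\sigma_{j+1}^{+}$ by $V^{2}$ gives $\tau^{-1}\tau^{-2j}\tau^{2(j+1)}\sigma_{j}^{-}\sigma_{j+1}^{+}=\tau\,\sigma_{j}^{-}\sigma_{j+1}^{+}$, which is exactly the coefficient that $\sigma_{j}^{-}\sigma_{j+1}^{+}$ acquires in $\mathcal{L}^{*}$ from transposing the $\tau\,\sigma_{j+1}^{-}\sigma_{j}^{+}$ summand; symmetrically $\tau\,\sigma_{j+1}^{-}\sigma_{j}^{+}$ is sent to $\tau^{-1}\sigma_{j+1}^{-}\sigma_{j}^{+}$, matching the transpose of $\tau^{-1}\sigma_{j}^{-}\sigma_{j+1}^{+}$. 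Thus the bulk part matches for every choice of parameters; the twist $V$ is designed precisely to interchange the hopping rates $p=\sqrt{pq}\,\tau$ and $q=\sqrt{pq}\,\tau^{-1}$ while leaving the diagonal killing terms alone.

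For the boundary part I would use $\beta=\delta=0$ to discard the site-$L$ terms, leaving $-\alpha(\sigma_{1}^{-}-1+n_{1})-\gamma(\sigma_{1}^{+}-n_{1})$; conjugation by $V^{2}$ turns this into $-\alpha(\tau^{-2}\sigma_{1}^{-}-1+n_{1})-\gamma(\tau^{2}\sigma_{1}^{+}-n_{1})$, whereas transposition gives $-\alpha(\sigma_{1}^{+}-1+n_{1})-\gamma(\sigma_{1}^{-}-n_{1})$. The constant and $n_{1}$ pieces coincide automatically, and the $\sigma_{1}^{\pm}$ coefficients coincide exactly when $\alpha\tau^{-2}=\gamma$, i.e. $\alpha/\gamma=p/q$, which is the standing assumption on the rates. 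Assembling the bulk and boundary comparisons yields the claimed identity. There is no serious obstacle here; the only substantive point---and the reason the lemma is stated under these hypotheses---is that the boundary identity genuinely forces $\alpha=\gamma\tau^{2}$ while the bulk identity does not, so I would be careful to flag exactly where that condition enters.
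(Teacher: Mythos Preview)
Your argument is correct and coincides with the paper's second proof---a direct detailed-balance check---worked out in full rather than by invoking the known reversibility of closed-boundary ASEP for the bulk part. The paper also offers a shorter structural proof: since $\mathcal{H}$ is symmetric and Proposition~\ref{Prop} gives $2\mathcal{H}=V\mathcal{L}V^{-1}+C\mathbb{I}$, transposing yields $V^{-1}\mathcal{L}^{*}V=V\mathcal{L}V^{-1}$; your explicit calculation has the merit of isolating exactly where the constraint $\alpha/\gamma=p/q$ enters (only in matching the off-diagonal $\sigma_1^{\pm}$ coefficients at the left boundary).
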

\begin{proof}
\underline{Proof 1:}

The Hamiltonian $\mathcal{H}$ is Hermitian\footnote{Although this is not explicitly stated in \cite{Doikou04}, Hamiltonians in mathematical physics are always Hermitian.}, meaning that
$$
\mathcal{H}^* = \mathcal{H}.
$$
Therefore, by Proposition \ref{Prop},
\begin{align*}
\mathcal{L}^* + C \mathbb{I} &= 2V \mathcal{H}^* \mathcal{V}^{-1} \\
&= 2V \mathcal{H} V^{-1} \\
&=V^2 \mathcal{L} V^{-2} + C\mathbb{I},
\end{align*}
implying the lemma.

\underline{Proof 2:}

Note that \cite{CorteelWilliamsDuke} gives the stationary measures for open ASEP with generic $\alpha,\beta,\gamma,\delta$. For these generic parameters, the process is not reversible. However, for our choice of parameters ($\tau^2=\alpha/\gamma$) the process does turn out to be reversible. Let $\vert \eta^+\rangle$ and $\vert \eta^- \rangle$ be two basis vectors which only different at the left boundary, where $\eta^+$ has a particle and $\eta^-$ does not. Then
\begin{align*}
\langle \eta^- \vert \mathcal{L} \vert \eta^+ \rangle &= \gamma = \frac{\alpha}{\alpha/\gamma} = \langle \eta^- \vert \mathcal{L}^* \vert \eta^+\rangle \frac{\langle \eta^- \vert V^{-2} \vert \eta^-\rangle}{\langle \eta^+ \vert V^{-2} \vert \eta^+\rangle,}\\
\langle \eta^+ \vert \mathcal{L} \vert \eta^- \rangle &= \alpha = \frac{\gamma}{(\alpha/\gamma)^{-1}}= \langle \eta^+ \vert \mathcal{L}^* \vert \eta^-\rangle \frac{\langle \eta^+ \vert V^{-2} \vert \eta^+\rangle}{\langle \eta^- \vert V^{-2} \vert \eta^-\rangle},\
\end{align*}
which shows the detailed balance equation at the boundary. In the bulk, the detailed balance equation reduces to teh detailed balance equation for ASEP with closed boundaries, which is known to hold.
\end{proof}


Let $\mathcal{Q}^1(s) \in \mathcal{U}_{\tau}(\widehat{\mathfrak{sl}}_2)$ be the element from (4.1) of \cite{Doikou04}
$$
\mathcal{Q}^1(s) = s^{-1} k_1e_1 + sk_1f_1 + x_1k_1^2 - x_1 \mathbb{I}.
$$ 
This element satisfies the property that
For $m=-1$ the value of $x_1 = \frac{e^{i \mu \xi}}{2\kappa \sinh i\mu}$ is simply equal to $1$, by  (2.17) of \cite{Doikou04}.  By (4.30) of  \cite{Doikou04}\footnote{The paper \cite{Doikou04} uses a different co--product than the one here: the left and right tensor products are reversed. This is due to the choice of the direction of asymmetry in the ASEP; here, we have an open boundary at the left and a closed boundary at the right, whereas the choice of reflection matrices in \cite{Doikou04} would have a closed boundary at the left (diagonal reflection matrix) and an open boundary at the right (non--diagonal reflection matrix). The examples in section \ref{Ex} will demonstrate that this is the correct choice of co--product for our present case.}, there is the commutation
$$
[\mathcal{H}, \rho_0^{\otimes L}(\Delta^{(L)}(\mathcal{Q}^1(\tau^{-1/2})))]=0.
$$

For any value of $\lambda$, the evaluation representation $\rho_{\lambda}$ maps $\mathcal{Q}^1(\tau^{-1/2})$ to 
\begin{equation}\label{M}
\left( 
\begin{array}{cc}
\tau^{-1}  -1 & 1 \\
1 & \tau-1
\end{array}
\right).
\end{equation}
By the relations in the quantum group $ \mathcal{U}_{\tau}(\widehat{\mathfrak{sl}}_2)$,
$$
 \mathcal{Q}^1(s\tau^{-1})k_1^2 = k_1^2 \mathcal{Q}^1(s).
$$
One can see directly that
\begin{equation}\label{abc}
\Delta( \mathcal{Q}^1(\tau^{-1/2}) ) = k_1^2 \otimes \mathcal{Q}^1(\tau^{-1/2})  + \mathcal{Q}^1(\tau^{-1/2})  \otimes 1,
\end{equation}
so by co--associativity
$$
\Delta^{(L)}( \mathcal{Q}^1(\tau^{-1/2}) ) = \sum_{x=1}^L \underbrace{k_1^2 \otimes \cdots \otimes k_1^2}_{x-1} \otimes \mathcal{Q}^1(\tau^{-1/2})  \otimes \underbrace{ 1 \otimes \cdots  \otimes 1}_{L-x}
$$

Let $S_N$ be the operator
$$
S_N:= \rho_0^{\otimes L}(\Delta^{(L)}(\mathcal{Q}^1(\tau^{-1/2}))^N)
$$
and let $D_N$ be the operator
$$
D_N = V^{}S_NV^{}.
$$

\begin{theorem}
For any $N\geq 1$, we have the duality result
$$
\mathcal{L}^*D_N = D_N \mathcal{L}.
$$
\end{theorem}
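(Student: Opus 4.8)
The plan is to obtain the statement as a formal consequence of the three ingredients already in place: the commutation $[\mathcal{H},\rho_0^{\otimes L}(\Delta^{(L)}(\mathcal{Q}^1(\tau^{-1/2})))]=0$ taken from (4.30) of \cite{Doikou04}, the identification $2\mathcal{H}=V\mathcal{L}V^{-1}+C\mathbb{I}$ of Proposition \ref{Prop}, and the detailed balance identity $\mathcal{L}^*=V^2\mathcal{L}V^{-2}$ of Lemma \ref{Lem}.

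First I would eliminate the dependence on $N$. Since $\rho_0^{\otimes L}$ is an algebra homomorphism on $\mathcal{U}_{\tau}(\widehat{\mathfrak{sl}}_2)^{\otimes L}$, it sends the $N$-th power of $\Delta^{(L)}(\mathcal{Q}^1(\tau^{-1/2}))$ to the $N$-th power of its image, so $S_N = S_1^N$. Hence $[\mathcal{H},S_1]=0$ immediately gives $[\mathcal{H},S_N]=0$ for every $N\geq 1$, and it suffices to treat $S_N$ as an arbitrary operator commuting with $\mathcal{H}$.

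Next, because $C\mathbb{I}$ is central, Proposition \ref{Prop} converts $[\mathcal{H},S_N]=0$ into $V\mathcal{L}V^{-1}S_N = S_N V\mathcal{L}V^{-1}$. Conjugating this identity — multiplying on the left and on the right by $V$ and regrouping — yields $V^2\mathcal{L}V^{-1}S_N V = V S_N V\mathcal{L}$. On the left I would insert $V^{-2}V=\mathbb{I}$ to write $V^2\mathcal{L}V^{-1}S_N V = (V^2\mathcal{L}V^{-2})(V S_N V)$, which by Lemma \ref{Lem} equals $\mathcal{L}^* D_N$; the right-hand side is $D_N\mathcal{L}$ by definition of $D_N$. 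This is the asserted duality.

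There is no genuine analytic obstacle here: once Proposition \ref{Prop} and Lemma \ref{Lem} are granted, the argument is a short chain of substitutions, and the only points requiring care are the bookkeeping of the $V$-conjugations and the reduction $S_N = S_1^N$. I would also keep in mind the footnote's warning that the coproduct used here is the opposite of the one in \cite{Doikou04}: this governs on which tensor leg the $k_1^2$ factors in $\Delta^{(L)}(\mathcal{Q}^1(\tau^{-1/2}))$ appear (cf. \eqref{abc} and the displayed formula for $\Delta^{(L)}$), but it plays no role in the manipulation above, which uses only that $S_N$ commutes with $\mathcal{H}$ and the two conjugation identities relating $\mathcal{H}$, $\mathcal{L}$, and $\mathcal{L}^*$.
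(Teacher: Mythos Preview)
Your proof is correct and follows essentially the same route as the paper's: combine Proposition~\ref{Prop}, Lemma~\ref{Lem}, and the commutation $[\mathcal{H},S_N]=0$ via a short chain of $V$-conjugations, exactly as done there (the paper credits the scheme to \cite{CGRS}). The only slip is notational: the identity you insert is $V^{-1}V=\mathbb{I}$, not ``$V^{-2}V=\mathbb{I}$'', but your displayed equation $(V^2\mathcal{L}V^{-2})(VS_NV)$ is correct regardless.
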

\begin{proof}
Once Proposition \ref{Prop} and Lemma \ref{Lem} are proven, this is similar to the argument made in \cite{CGRS}. We briefly recall the proof again for completeness.

Combine the two identities
$$
2\mathcal{H} = V\mathcal{L}V^{-1} + C \mathbb{I}
$$
and
$$
\mathcal{H}S_N = S_N\mathcal{H},
$$
to get that
$$
V \mathcal{L} V^{-1} S_N + CS_N = S_N V \mathcal{L} V^{-1} + CS_N.
$$
Now, using that
$$
V^{2} \mathcal{L}V^{-2} = \mathcal{L}^*,
$$
we have
$$
V^2 \mathcal{L} V^{-2} VS_N  = V S_N V^{} \mathcal{L} V^{-1}
$$
is equivalent to
$$
\mathcal{L}^*  V S_N V = V S_N V \mathcal{L}.
$$
\end{proof}

\begin{remark}
Note that the duality function $D_N$ and symmetry function $S_N$ only depends on $\alpha$ and $\gamma$ through their ratio $\alpha/\gamma$.
\end{remark}

By applying additional symmetries, we obtain two more duality functions. Let $\Pi$ be the particle hole involution, defined by 
$$
\Pi = 
\left(
\begin{array}{cc}
0 & 1 \\
1 & 0
\end{array}
\right)^{\otimes L}.
$$
Let $\widetilde{\mathcal{L}}$ be the generator for ASEP, where particles jump to the \textit{left} at rate $p$ and \text{right} at rate $q$, particles \textit{exit} at the left boundary at rate $\alpha$ and enter at the left boundary at rate $\gamma$, with closed boundary conditions at the right boundary. 

\begin{corollary} We have
$$
\mathcal{L}^*D_N\Pi = D_N\Pi \widetilde{\mathcal{L}}.
$$
On the semi--infinite lattice $\mathbb{Z}_{>0}$, we have
$$
\mathcal{L}^*DV^{-2} = DV^{-2} \widetilde{\mathcal{{L}}}.
$$

\end{corollary}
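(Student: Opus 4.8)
The first identity is immediate from the preceding Theorem once one records the operator identity
$$\Pi\,\mathcal{L}\,\Pi=\widetilde{\mathcal{L}}.$$
This is checked term by term: $\Pi$ conjugates the one--site operators by $\Pi\sigma^{\pm}_j\Pi=\sigma^{\mp}_j$ and $\Pi n_j\Pi=1-n_j$ (hence $\Pi\sigma^z_j\Pi=-\sigma^z_j$), so in the explicit formula for $\mathcal{L}$ each bulk bond term $\tau^{-1}\bigl(\sigma^-_j\sigma^+_{j+1}-(1-n_j)n_{j+1}\bigr)+\tau\bigl(\sigma^-_{j+1}\sigma^+_j-n_j(1-n_{j+1})\bigr)$ is carried to the same expression with $\tau\leftrightarrow\tau^{-1}$ (that is, with the right and left jump rates interchanged), the left--boundary term $-\alpha(\sigma^-_1-1+n_1)-\gamma(\sigma^+_1-n_1)$ is carried to $-\gamma(\sigma^-_1-1+n_1)-\alpha(\sigma^+_1-n_1)$ (that is, with $\alpha\leftrightarrow\gamma$), and there is no right--boundary term since $\beta=\delta=0$. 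These are precisely the rates of $\widetilde{\mathcal{L}}$. As $\Pi^2=\mathbb{I}$, we may rewrite this as $\mathcal{L}\Pi=\Pi\widetilde{\mathcal{L}}$, and right--multiplying $\mathcal{L}^{*}D_N=D_N\mathcal{L}$ by $\Pi$ gives $\mathcal{L}^{*}D_N\Pi=D_N\mathcal{L}\Pi=D_N\Pi\widetilde{\mathcal{L}}$.

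For the statement on $\mathbb{Z}_{>0}$ the plan is to take $L\to\infty$ in the first identity, the main obstacle being that $\Pi$ --- an infinite product of swaps --- is not an admissible operator on $\mathbb{Z}_{>0}$: it does not preserve the particle number, sending the empty configuration to the fully occupied one. Hence one must eliminate $\Pi$ before taking the limit, using that $\Pi$ and $V$ nearly commute on the length--$L$ lattice,
$$\Pi V\Pi=\tau^{-L(L+1)/2}\,V^{-1},$$
which follows from $\Pi\bigl(\sum_{j=1}^L j\,n_j\bigr)\Pi=\tfrac{L(L+1)}{2}-\sum_{j=1}^L j\,n_j$. Moving the $\Pi$ in $D_N\Pi=V S_N V\,\Pi$ through the two factors of $V$ produces the scalar $\tau^{-L(L+1)}$, which diverges but appears identically on both sides of the part--(1) identity and so cancels; in the limit the empty/full relabeling no longer acts on states but is reflected in the choice of reference configuration for the dual process, and its residual interaction with the two $V$'s survives as a twist by $V^{-2}$. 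The outcome is that the limiting intertwiner of $\mathcal{L}^{*}$ and $\widetilde{\mathcal{L}}$ has the form $D\,V^{-2}$, with $D$ the $L\to\infty$ limit of $V S_N V$; the factor $V^{-2}$ is moreover exactly what makes that limit converge once the dual argument is taken to be a finitely supported configuration.

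Carrying out this limit is the technical heart of the matter. One sets up the semi--infinite left--open ASEP $\mathcal{L}$ and the generator $\widetilde{\mathcal{L}}$ on their natural state spaces, notes that the proofs of Proposition~\ref{Prop} and Lemma~\ref{Lem} are local and so persist in infinite volume (so that $2\mathcal{H}=V\mathcal{L}V^{-1}+C\mathbb{I}$, $V^{2}\mathcal{L}V^{-2}=\mathcal{L}^{*}$ and $[\mathcal{H},S_N]=0$ all still hold), and then verifies the desired identity entry by entry: with the dual argument a finite configuration, each matrix element of the operators involved is a finite sum over finitely many hops and finitely many $V$--weights, so the finite--volume identity passes cleanly to $\mathcal{L}^{*}D\,V^{-2}=D\,V^{-2}\widetilde{\mathcal{L}}$ on $\mathbb{Z}_{>0}$. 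The algebra itself is inherited wholesale from the Theorem; only this convergence analysis, and the particle--hole bookkeeping that converts the finite--volume $\Pi$ into the twist $V^{-2}$, demands attention.
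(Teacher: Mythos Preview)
Your proof of the first identity is correct and matches the paper's approach: record $\Pi\mathcal{L}\Pi=\widetilde{\mathcal{L}}$ (equivalently $\mathcal{L}\Pi=\Pi\widetilde{\mathcal{L}}$) and right--multiply the Theorem by $\Pi$.

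For the second identity your route diverges from the paper's and is both more complicated and incomplete. The paper does \emph{not} obtain the semi--infinite statement as a limit of the first one; instead it proceeds exactly in parallel to part one, by checking directly the conjugation identity
\[
V^{-2}\,\widetilde{\mathcal{L}}\,V^{2}=\mathcal{L}
\]
on $\mathbb{Z}_{>0}$ (using $V^{-2}\sigma_j^{\pm}V^{2}=\tau^{\mp 2j}\sigma_j^{\pm}$ together with $\alpha=\gamma\tau^{2}$), and then right--multiplying the Theorem $\mathcal{L}^{*}D=D\mathcal{L}$ by $V^{-2}$ to get $\mathcal{L}^{*}DV^{-2}=D\mathcal{L}V^{-2}=DV^{-2}\widetilde{\mathcal{L}}$. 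This is a one--line algebraic step, entirely analogous to your use of $\Pi\mathcal{L}\Pi=\widetilde{\mathcal{L}}$ in part one; no limit, no divergent scalars, no bookkeeping with $\Pi$ is needed.

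By contrast, your argument for part two is only a sketch. You correctly note $\Pi V\Pi=\tau^{-L(L+1)/2}V^{-1}$, but you never actually commute $\Pi$ through $S_N$ (which is not diagonal), so the claim that the net effect on $D_N\Pi$ is a twist by $V^{-2}$ is asserted rather than shown. The sentence ``in the limit the empty/full relabeling no longer acts on states but is reflected in the choice of reference configuration'' is heuristic, and the convergence analysis you flag as ``the technical heart of the matter'' is left undone. All of this is avoidable: the direct identity $V^{-2}\widetilde{\mathcal{L}}V^{2}=\mathcal{L}$ gives the result immediately.
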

\begin{proof}
The holes of ASEP evolving under $\mathcal{L}$ have the same evolution as the particles of ASEP evolving under $\tilde{\mathcal{L}}$. In other words, $\Pi \mathcal{L} \Pi = \mathcal{L}^*$. This implies the first statement. It can be checked directly that on the semi--infinite lattice, $V^{-2} \widetilde{\mathcal{L}} V^2 = \mathcal{L}$, implying the second statement.
\end{proof}

We now proceed to an explicit expression for $S_N$ (and hence of $D_N$). For non--negative integers $a$ and $b$, let $Q^{a,b}$ be the element of $\mathcal{U}_{\tau}(\widehat{\mathfrak{sl}}_2)$ defined by 
$$
Q^{a,b} = \sum_{l_1 + \ldots + l_a \leq b}  (k_1^2)^{b} \mathcal{Q}^1(\tau^{l_1+\ldots + l_{a}-1/2})  \cdots \mathcal{Q}^1(\tau^{l_2+\ldots + l_{a}-1/2})  \cdots \mathcal{Q}^1(\tau^{l_a-1/2}) 
$$
For any set of integers $m_1,\ldots,m_L$ and $1 \leq i\leq j \leq L$, let $m_{[i,j]}=m_i + \ldots + m_j$.

\begin{prop}
The symmetry operator $S_N$ has the form
$$
\sum_{m_1+\ldots+m_L=N} \sum_{j=1}^L \rho_0(Q^{m_j,m_{[j+1,L]}}_j )
$$
\end{prop}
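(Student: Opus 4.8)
The plan is to expand $\bigl(\Delta^{(L)}(\mathcal{Q}^1(\tau^{-1/2}))\bigr)^{N}$ directly, using the explicit formula for $\Delta^{(L)}(\mathcal{Q}^1(\tau^{-1/2}))$ already derived above, and then to bring each resulting summand into normal form site by site. Abbreviate
$$
T_x = \underbrace{k_1^2\otimes\cdots\otimes k_1^2}_{x-1}\otimes \mathcal{Q}^1(\tau^{-1/2})\otimes\underbrace{1\otimes\cdots\otimes 1}_{L-x},
$$
so that $\Delta^{(L)}(\mathcal{Q}^1(\tau^{-1/2}))=\sum_{x=1}^L T_x$ and hence $\bigl(\Delta^{(L)}(\mathcal{Q}^1(\tau^{-1/2}))\bigr)^{N}=\sum_{(x_1,\dots,x_N)\in\{1,\dots,L\}^N} T_{x_1}\cdots T_{x_N}$. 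Each product $T_{x_1}\cdots T_{x_N}$ is a pure tensor, whose factor at site $j$ is the ordered product over $r=1,\dots,N$ of $k_1^2$ (for those $r$ with $x_r>j$), of $\mathcal{Q}^1(\tau^{-1/2})$ (for $x_r=j$), and of $1$ (for $x_r<j$).

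Next I group the sum over $(x_1,\dots,x_N)$ by its type $(m_1,\dots,m_L)$, where $m_i:=\#\{r: x_r=i\}$, which runs over all decompositions $m_1+\cdots+m_L=N$ into nonnegative integers. For a fixed type the factor of $T_{x_1}\cdots T_{x_N}$ at site $j$ is a word of length $m_{[j,L]}$ in the two letters $k_1^2$ and $\mathcal{Q}^1(\tau^{-1/2})$ with exactly $m_j$ occurrences of the latter. The bookkeeping step is to check that, as $(x_1,\dots,x_N)$ ranges over all tuples of the given type, the resulting $L$-tuple of words ranges bijectively over the full product of these word sets: the word at site $1$ determines $\{r: x_r=1\}\subseteq\{1,\dots,N\}$, which leaves the word at site $2$ free to pick $\{r: x_r=2\}$ among the remaining indices, and so on inductively, while the identity $N!/(m_1!\cdots m_L!)=\prod_{j=1}^{L}\binom{m_{[j,L]}}{m_j}$ (telescoping) confirms that nothing is over- or undercounted. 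Consequently $\sum_{(x_r)\ \mathrm{of\ type}\ (m_i)} T_{x_1}\cdots T_{x_N}$ equals the tensor product over $j$ of the sum of all such words at site $j$.

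Finally I put each site's sum into normal form. By the relation $\mathcal{Q}^1(s\tau^{-1})k_1^2=k_1^2\mathcal{Q}^1(s)$, equivalently $\mathcal{Q}^1(s)k_1^2=k_1^2\mathcal{Q}^1(s\tau)$, carrying a $k_1^2$ leftward past a $\mathcal{Q}^1$ multiplies that factor's spectral parameter by $\tau$. A word with $m_j$ copies of $\mathcal{Q}^1(\tau^{-1/2})$ and with $g_0,g_1,\dots,g_{m_j}\ge 0$ copies of $k_1^2$ interleaved before/between/after them (so $g_0+\cdots+g_{m_j}=m_{[j+1,L]}$) thus becomes $(k_1^2)^{m_{[j+1,L]}}\,\mathcal{Q}^1(\tau^{g_1+\cdots+g_{m_j}-1/2})\cdots\mathcal{Q}^1(\tau^{g_{m_j}-1/2})$, since the $t$-th $\mathcal{Q}^1$ from the left is crossed by exactly $g_t+\cdots+g_{m_j}$ of the $k_1^2$'s; setting $l_t=g_t$, the constraint $g_0\ge 0$ is precisely $l_1+\cdots+l_{m_j}\le m_{[j+1,L]}$, so the site-$j$ sum is exactly $Q^{m_j,m_{[j+1,L]}}$. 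Applying $\rho_0^{\otimes L}$ and reassembling the sites gives $S_N=\sum_{m_1+\cdots+m_L=N}\bigotimes_{j=1}^{L}\rho_0\bigl(Q^{m_j,m_{[j+1,L]}}_{j}\bigr)$, the asserted form. I expect the main obstacle to be the middle step together with the tracking of exponents in the last one: verifying that the expansion genuinely decouples into independent per-site sums, and that the powers of $\tau$ and the summation range $l_1+\cdots+l_{m_j}\le m_{[j+1,L]}$ come out on the nose as in the definition of $Q^{a,b}$. The algebraic input itself is only the single commutation relation above together with the co-associativity of $\Delta$.
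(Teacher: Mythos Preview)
Your proposal is correct and follows essentially the same route as the paper's own proof: expand $\bigl(\Delta^{(L)}(\mathcal{Q}^1(\tau^{-1/2}))\bigr)^N$, record at each site $j$ the interleaved word in $k_1^2$ and $\mathcal{Q}^1(\tau^{-1/2})$ with block lengths summing to $m_{[j+1,L]}$, and then push all $k_1^2$'s to the left using $\mathcal{Q}^1(s)k_1^2=k_1^2\mathcal{Q}^1(s\tau)$ to match the definition of $Q^{a,b}$. The paper labels the blocks $l_0,\dots,l_{m_j}$ exactly as your $g_0,\dots,g_{m_j}$ and ends in one line; you supply the pieces it leaves implicit---in particular the bijection between tuples $(x_1,\dots,x_N)$ of a fixed type and independent choices of per-site words (verified via $\tfrac{N!}{m_1!\cdots m_L!}=\prod_j\binom{m_{[j,L]}}{m_j}$), and the explicit tracking of the $\tau$-exponents---so your argument is a strictly more detailed version of the same proof rather than a different one.
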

\begin{proof}
When expanding $\Delta^{(L)}(\mathcal{Q}^1(\tau^{-1/2}))^N$, let $m_j$ denote the number of times that $\mathcal{Q}^1$ acts on lattice site $j$ for $1 \leq j \leq L$. We must have that $m_1 + \ldots + m_L=N$. At lattice site $x$, the operator $k_1^2$ acts $m_{[j+1,L]}$ times, corresponding to the $m_{[j+1,L]}$ times that $\mathcal{Q}^1$ acts to the right of $j$. So the action at lattice site $j$ is of the form
$$
k_1^2 \cdots k_1^2 \mathcal{Q}^1(\tau^{-1/2}) k_1^2 \cdots  \cdots  k_1^2\mathcal{Q}^1(\tau^{-1/2})  k_1^2 \cdots k_1^2 \mathcal{Q}^1(\tau^{-1/2})  k_1^2 \cdots k_1^2.
$$
Let $l_0$ denote the length of the first block of $k_1^2$, and let $l_1$ denote the length of the second block, and so forth, up to $l_{m_j}$. We must have
$
l_0 + \ldots + l_{m_j} = m_{[j+1,L]}.
$
By repeated applications of \eqref{abc}, the result follows. 
\end{proof}

\section{Examples}\label{Ex}
Suppose that $L=1$ and $N$ is arbitrary. Let $\vert 1\rangle$ denote the vector $(0\ 1)$ and $\vert 0\rangle$ denote the vector $(1\ 0)$. Taking $M$ to be the matrix in \eqref{M}, the identity $\langle 0 \vert\mathcal{L}^*D \vert 1 \rangle= \langle 0 \vert D \mathcal{L} \vert 1\rangle$ becomes
$$
-\gamma (M^N)_{12} \tau^{-1} + \gamma (M^N)_{11} = -\alpha (M^N)_{21} \tau^{-1} + \alpha (M^N)_{22}\tau^{-2} .
$$
For $\alpha = \gamma \tau^2$, one can check that both sides equal $-\gamma$ for $N$ odd and $\gamma$ for $N$ even. 

Suppose that $N=1$ and $L$ is arbitrary. Let $\vert x \rangle$ denote the particle configuration with a single particle at site $x$, and $\vert \emptyset \rangle$ denote the particle configuration with no particles. When $\mathcal{Q}^1$ is applied to lattice site $y$, the operator $k_1^2$ acts on the $y-1$ sites to the left, as the constant $\tau^{}$ on particles and $\tau^{-1}$ on holes.  The operator $V = \tau^{-\sum_j j n_j}$ acts on all sites, but only has a nonzero contribution at particles. Thus $0=\langle \emptyset \vert\mathcal{L}^*D \vert x \rangle= \langle \emptyset \vert D \mathcal{L} \vert x\rangle$ amounts to the identity
$$
p (\tau^{-1})^x (\tau^{-1})^{L-(x+1)}  + q (\tau^{-1})^{x-2}(\tau^{-1})^{L-(x-1)}  - (p+q)(\tau^{-1})^{x-1}(\tau^{-1})^{L-x} =0.
$$
And indeed, for $\tau = \sqrt{p/q}$, the left--hand--side is
$$
\tau^{-(L-1)}(p+q- p -q),
$$
which equals $0$.

\bibliographystyle{alpha}
\bibliography{Exposition}
\end{document}